\newcommand{\vast}{\bBigg@{4}}
\theoremstyle{definition}
\theoremstyle{remark}
\newtheorem{lmm}{Lemma}
\newtheorem*{prop}{Proposition}
\g@addto@macro\th@remark{\thm@headpunct{\normalfont:}}
\newcommand{\distas}[1]{\mathbin{\overset{#1}{\kern\z@\sim}}}%
\newsavebox{\mybox}\newsavebox{\mysim}
\newcommand{\distras}[1]{%
  \savebox{\mybox}{\hbox{\kern3pt$\scriptstyle#1$\kern3pt}}%
  \savebox{\mysim}{\hbox{$\sim$}}%
  \mathbin{\overset{#1}{\kern\z@\resizebox{\wd\mybox}{\ht\mysim}{$\sim$}}}%
}
\begin{document}

\title{On the Ergodic Capacity of Underlay Cognitive Dual-Hop AF Relayed Systems under Non-Identical Generalized-\emph{K} Fading Channels}

\author{Nikolaos~I.~Miridakis
\thanks{N. I. Miridakis is with the Department of Computer Systems Engineering, Piraeus University of Applied Sciences, 122 44, Aegaleo, Greece (e-mail: nikozm@unipi.gr).}
}

\markboth{}%
{On the Ergodic Capacity of Underlay Cognitive Dual-Hop AF Relayed Systems under Non-Identical Generalized-\emph{K} Fading Channels}

\maketitle

\begin{abstract}
The ergodic capacity of underlay cognitive (secondary) dual-hop relaying systems is analytically investigated. Specifically, the amplify-and-forward transmission protocol is considered, while the received signals undergo multipath fading and shadowing with non-identical statistics. To efficiently describe this composite type of fading, the well-known generalized-$K$ fading model is used. New analytical expressions and quite accurate closed-form approximations regarding the ergodic capacity of the end-to-end communication are obtained, in terms of finite sum series of the Meijer's-$G$ function. The analytical results are verified with the aid of computer simulations, while useful insights are revealed. 
\end{abstract}

\begin{IEEEkeywords}
Amplify-and-forward (AF), cognitive systems, ergodic capacity, generalized fading channels, performance analysis.
\end{IEEEkeywords}

\IEEEpeerreviewmaketitle

\section{Introduction}
\IEEEPARstart{U}{nderlay} cognitive transmission represents one of the most popular spectrum sharing techniques, where secondary (unlicensed) users utilize the spectrum resources of another primary (licensed) service. Due to its mode of operation, the transmission power of secondary users is limited, such that its interference onto the primary users remains below prescribed tolerable levels. However, this dictated constraint dramatically affects the coverage and/or capacity of the secondary communication. Such a condition can be effectively counteracted with the aid of relayed transmission. Performance assessment of these systems has been well-investigated in the open technical literature to date (e.g., see \cite{ref2} and references therein). Nevertheless, these works assumed non-shadowing environments; a rather infeasible condition.

In practical wireless communication systems, the signal always experiences composite small-scale (multipath) fading and large-scale shadowing simultaneously. The rigorous log-normal distribution appropriately describes the latter effect, giving rise to composite models, such as the Rayleigh/log-normal distribution. Alternatively, the generalized-$K$ ($K_{G}$) distribution model can efficiently describe this composite effect, while preserves mathematical tractability at the same time. It is noteworthy that it includes the classical Rayleigh, Nakagami-$m$ and Rayleigh/Gamma (i.e., the $K$ distribution) fading models as special types \cite{ref3,meegc}.

Yet, only few research works have investigated the performance of cognitive relayed transmission over composite fading/shadowing channels. Specifically, the performance of an underlay cognitive relaying system was analytically studied in \cite{ref9} under $K_{G}$ fading channels, by considering the decode-and-forward (DF) relaying scheme. The derived expressions therein were provided in terms of an infinite series representation. Further, the authors of \cite{ref10} studied the scenario of underlay cognitive systems with multi-hop/multi-relay transmission under $K_{G}$ fading channels, when the amplify-and-forward (AF) relaying scheme is used. However, the end-to-end ($e2e$) performance was only approximated in that work with the aid of bound expressions, not exact ones. Notably, these bounds were tight only in high signal-to-noise ratio (SNR) regions (i.e., see \cite[Eq. (6)]{ref10}). Nonetheless, it should be noticed that the mode of operation used for underlay cognitive systems fundamentally supports quite a low transmit power, which contraindicates the accuracy of the former bounds in low SNR regions. In addition, identically distributed fading channels were assumed in \cite{ref10} with common statistics (i.e., equal distance and fading severity for all the included signals), which is not always the case in real-life network setups.  

Capitalizing on the aforementioned observations, an underlay cognitive relayed system is investigated in current work, where the relay utilizes the cost-effective and computational-efficient AF transmission scheme. The signal of each link (from both secondary and primary nodes) is subject to independent and non-identical composite multipath fading/shadowing, modeled by the $K_{G}$ distribution. This condition is suitable for most practical applications where the involved signals undergo arbitrary link distances with distinct fading statistics. New analytical expressions and quite accurate closed-form approximations for the $e2e$ ergodic capacity are derived, while some useful engineering insights are also obtained. The derived expressions are valid in the entire SNR region (low-to-high), while they are time-efficient in comparison to other existing methods so far (e.g., numerical manifold integrations or Monte-Carlo simulations).

\section{System Model}
Consider an underlay (secondary) dual-hop system where the source (S) communicates with the destination (D) via an intermediate relay (R) node. This system operates in the presence of a licensed (primary) node (P$_{R}$). Moreover, the signal transmission power of the secondary system is, in principle, maintained quite low in order not to dramatically affect the reception quality of the primary communication in terms of interfering power. To this end, assume that the direct communication between S and D is not feasible due to strong propagation attenuation and/or severe shadowing, whereas keeping in mind the constrained transmission power regime. Hence, the $e2e$ communication is facilitated with the aid of R.

\subsection{Power Allocation}
The transmitted power of S and R are denoted as $P_{S}=w/\left|h_{SP_{R}}\right|^{2}$ and $P_{R}=w/\left|h_{RP_{R}}\right|^{2}$, where $\left|h_{SP_{R}}\right|^{2}$ and $\left|h_{RP_{R}}\right|^{2}$ correspond to the channel gains of S-to-P$_{R}$ and R-to-P$_{R}$, respectively. Also, $w$ represents a power threshold, the so-called \emph{interference temperature}, which should not be exceeded during the transmission of secondary nodes. Such an approach has been widely adopted in the open technical literature (e.g., see \cite{ref2}-\cite{ref10} and references therein), mainly because it effectively balances performance and complexity.

In principle, channel state information (CSI) of the links between the primary and secondary nodes can be obtained through a feedback channel from the primary service and due to the channel reciprocity. CSI can also be captured through a band manager that mediates the exchange of information between the primary and secondary networks \cite{ref2}. It is noteworthy that when $w/\left|h_{SP_{R}}\right|^{2}$ (or $w/\left|h_{RP_{R}}\right|^{2}$) happens to be higher than the maximal allowable transmitted power, say $P_{\text{max}}$, power control of the corresponding secondary node may modify $w$ to $w'$ so as $w'/\left|h_{SP_{R}}\right|^{2}=P_{\text{max}}$ (or $w'/\left|h_{RP_{R}}\right|^{2}=P_{\text{max}}$) is satisfied. This issue is further analyzed in the next section.

\subsection{Signal Model}
The $e2e$ communication occurs in two consecutive transmission phases, one for each hop. The received signal of R at the end of the first phase is given by $y_{R}=h_{SR}x+n_{R}$, where $y_{R}$, $h_{SR}$, $x$ and $n_{R}$ represent the received signal, the channel coefficient of the S-to-R link, the transmitted signal and the additive white Gaussian noise (AWGN) at R, respectively. Then, this signal is amplified with the variable gain $G$ and forwarded to D during the second transmission phase. Hence, the overall signal at D is expressed as $y_{D}=Gh_{RD}y_{R}+n_{D}=Gh_{SR}h_{RD}x+Gh_{RD}n_{R}+n_{D}$, where $h_{RD}$ and $n_{D}$ correspond to the channel coefficient of the R-to-D link and the AWGN at D, respectively. For notational simplicity and without loss of generality, assume that noise powers are identical, i.e., $P_{n_{R}}=P_{n_{D}}\triangleq N_{0}$.

We retain our focus on CSI-assisted AF transmission, since the knowledge of CSI represents a requisite for the efficient operation of underlay cognitive relaying. Thus, since $P_{R}=G^{2}(\left|h_{SR}\right|^{2}P_{S}+N_{0})$, it can be seen that $G^{2}=(|h_{RP_{R}}|^{2}(\frac{|h_{SR}|^{2}}{|h_{SP_{R}}|^{2}}+\frac{N_{0}}{w}))^{-1}$. After some straightforward algebra, the $e2e$ SNR reads as
\begin{align}
\gamma_{e2e}=\frac{G^{2}\left|h_{RD}\right|^{2}\left|h_{SR}\right|^{2}P_{S}}{G^{2}\left|h_{RD}\right|^{2}N_{0}+N_{0}}=\frac{\gamma_{1}\gamma_{2}}{\gamma_{1}+\gamma_{2}+1},
\end{align}
where $\gamma_{1}\triangleq \frac{w\left|h_{SR}\right|^{2}}{N_{0}\left|h_{SP_{R}}\right|^{2}}$ and $\gamma_{2}\triangleq \frac{w\left|h_{RD}\right|^{2}}{N_{0}\left|h_{RP_{R}}\right|^{2}}$.

\section{Performance Analysis}
Ergodic capacity of the entire $e2e$ communication is defined as
\begin{align}
\nonumber
C\triangleq \frac{1}{2}\mathbb{E}[\log_{2}(1+\gamma_{e2e})]&=\frac{1}{2}\mathbb{E}\left[\log_{2}\left(\frac{(1+\gamma_{1})(1+\gamma_{2})}{1+\gamma_{1}+\gamma_{2}}\right)\right]\\
&=\frac{1}{2}(C_{1}+C_{2}-C_{1+2}),
\label{ergcap}
\end{align}
where $\mathbb{E}[.]$ denotes expectation, $C_{l}\triangleq \mathbb{E}[\log_{2}(1+\gamma_{l})]$ with $l\in \{1,2\}$, $C_{1+2}\triangleq \mathbb{E}[\log_{2}(1+\gamma_{1}+\gamma_{2})]$ and the factor $1/2$ is due to the involvement of two transmission phases.

Following lemmas will be quite useful for the statistical derivation of (\ref{ergcap}).

\begin{lmm}
Probability density function (PDF) of $\gamma_{l}$ with $l \in \{1,2\}$ is obtained by
\begin{align}
\nonumber
&f_{\gamma_{l}}(x)=\frac{\left(\frac{k_{il}m_{il}\Omega_{jl}N_{0}}{k_{jl}m_{jl}\Omega_{il}w}\right)^{\Delta}x^{\Delta-1}}{\Gamma(m_{il})\Gamma(m_{jl})\Gamma(k_{il})\Gamma(k_{jl})}\\
&\times G^{2,2}_{2,2}\left[\frac{k_{il}m_{il}\Omega_{jl}N_{0}x}{k_{jl}m_{jl}\Omega_{il}w}~\vline
\begin{array}{c}
1-\Delta-k_{jl},1-\Delta-m_{jl}\\
\Theta,-\Theta
\end{array} \right],
\label{fg}
\end{align}
where $\{i,j\}=\{SR,SP_{R}\}$ for $l=1$, while $\{i,j\}=\{RD,RP_{R}\}$ for $l=2$. Also, $G^{m,n}_{p,q}[.]$ stands for the Meijer's-$G$ function \cite[Eq. (9.301)]{tables}. Moreover, $m_{il}\geq 0.5$ and $k_{il}\geq 0$ denote the multipath fading and shadowing severity of the $i$th channel coefficient, respectively, whereas higher (lower) values indicate less (more) severe fading channel conditions. In addition, $\Delta\triangleq \frac{k_{il}+m_{il}}{2}$ and $\Theta\triangleq \frac{k_{il}-m_{il}}{2}$ are introduced for notational simplicity, while $\Omega_{il}\triangleq d^{-\alpha_{il}}_{il}$, where $d_{il}$ and $\alpha_{il}$ represent the corresponding normalized link distance (with a reference distance equal to $1$ km) and path loss factor. Usually $\alpha_{il}\in \{2-6\}$, with $\alpha_{il}=2$ indicating free-space loss, while $\alpha_{il}>2$ representing suburban to dense urban environments. 
\end{lmm}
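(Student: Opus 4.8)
The plan is to exploit the fact that, up to the deterministic factor $w/N_{0}$, each $\gamma_{l}$ is the ratio of two independent generalized-$K$ channel gains, and to recover its PDF from its Mellin transform (equivalently, its integer/fractional moments). First I would record the moments of a $K_{G}$-distributed power $g=|h_{il}|^{2}$ with multipath/shadowing parameters $m_{il},k_{il}$ and mean $\Omega_{il}$, namely
\[
\mathbb{E}[g^{n}]=\frac{\Gamma(m_{il}+n)\,\Gamma(k_{il}+n)}{\Gamma(m_{il})\,\Gamma(k_{il})}\left(\frac{\Omega_{il}}{m_{il}k_{il}}\right)^{n},
\]
which follows immediately from the Gamma--Gamma (conditional-mean) construction of the $K_{G}$ model by iterated expectation. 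This closed form is the single analytic input the whole derivation rests on.

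Since $\gamma_{l}=\frac{w}{N_{0}}\,\frac{|h_{il}|^{2}}{|h_{jl}|^{2}}$ with numerator and denominator independent, I would write $\mathbb{E}[\gamma_{l}^{s-1}]=(w/N_{0})^{s-1}\,\mathbb{E}[|h_{il}|^{2(s-1)}]\,\mathbb{E}[|h_{jl}|^{2(1-s)}]$ and substitute the moment formula twice (exponent $n=s-1$ for the numerator and $n=1-s$ for the denominator). Collecting the scale factors shows that they fuse into a single power $\beta^{1-s}$, where $\beta\triangleq\frac{k_{il}m_{il}\Omega_{jl}N_{0}}{k_{jl}m_{jl}\Omega_{il}w}$ is exactly the argument multiplier appearing in (\ref{fg}); what remains is the product $\Gamma(m_{il}+s-1)\Gamma(k_{il}+s-1)\Gamma(m_{jl}+1-s)\Gamma(k_{jl}+1-s)$ over the four normalising Gammas. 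Inverting, $f_{\gamma_{l}}(x)=\frac{1}{2\pi i}\int_{\mathcal{L}}\mathbb{E}[\gamma_{l}^{s-1}]\,x^{-s}\,ds$, produces a Mellin--Barnes integral whose integrand is precisely a ratio of products of Gamma functions, i.e.\ a Meijer's-$G$ function by its defining contour integral \cite[Eq. (9.301)]{tables}.

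The last step is bookkeeping: the shift $s\mapsto s+1-\Delta$ turns $m_{il}+s-1,\,k_{il}+s-1$ into $s-\Theta,\,s+\Theta$ (using $m_{il}-\Delta=-\Theta$ and $k_{il}-\Delta=\Theta$) and $m_{jl}+1-s,\,k_{jl}+1-s$ into $\Delta+m_{jl}-s,\,\Delta+k_{jl}-s$, while simultaneously extracting the prefactor $\beta^{\Delta}x^{\Delta-1}$ and the scaled argument $\beta x$. Matching these four arguments against the $G^{2,2}_{2,2}$ template then forces the lower parameters $\{\Theta,-\Theta\}$ and the upper parameters $\{1-\Delta-k_{jl},\,1-\Delta-m_{jl}\}$, reproducing (\ref{fg}) verbatim. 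I expect the only genuinely delicate point to be the analytic justification rather than the algebra: the denominator's inverse moments exist only for $\operatorname{Re}(1-s)>-\min(m_{jl},k_{jl})$, so I must verify that a vertical contour separating the left poles of $\Gamma(s\pm\Theta)$ from the right poles of $\Gamma(\Delta+m_{jl}-s)$ and $\Gamma(\Delta+k_{jl}-s)$ exists. This holds because $\Delta\ge|\Theta|$ and $m_{jl},k_{jl}>0$ give $|\Theta|<\Delta+\min(m_{jl},k_{jl})$, which both guarantees convergence and legitimises the $G^{2,2}_{2,2}$ representation. (A direct alternative --- writing the ratio density as $\int_{0}^{\infty}y\,f_{|h_{il}|^{2}}(xy)\,f_{|h_{jl}|^{2}}(y)\,dy$, replacing each $K_{G}$ Bessel-$K$ factor by its $G^{2,0}_{0,2}$ form, and invoking the Mellin convolution identity for a product of two Meijer's-$G$ functions --- yields the same answer through a messier intermediate integral, so I would favour the moment-based route above.)
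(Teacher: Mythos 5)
Your proposal is correct, but it follows a genuinely different route from the paper's. The paper works at the PDF level: it first casts the squared $K_{G}$ envelope density into the $G^{2,0}_{0,2}$ form (\ref{fh}) via \cite[Eq. (9.34.3)]{tables} applied to \cite[Eq. (2)]{ref3}, then inserts both densities into the ratio formula $f_{\left|h_{il}\right|^{2}/\left|h_{jl}\right|^{2}}(x)=\int^{\infty}_{0}y f_{\left|h_{il}\right|^{2}}(xy)f_{\left|h_{jl}\right|^{2}}(y)\,dy$ and evaluates the resulting integral with the tabulated identity \cite[Eq. (2.24.1.1)]{ref4} --- precisely the ``messier alternative'' you mention and set aside at the end. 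You instead compute the Mellin transform of $\gamma_{l}$ from the $K_{G}$ fractional moments and invert, identifying the four-Gamma Mellin--Barnes integrand with $G^{2,2}_{2,2}$ after the shift $s\mapsto s+1-\Delta$; I verified the bookkeeping ($m_{il}-\Delta=-\Theta$, $k_{il}-\Delta=\Theta$, fusion of the scale factors into $\beta^{1-s}$ with $\beta=\frac{k_{il}m_{il}\Omega_{jl}N_{0}}{k_{jl}m_{jl}\Omega_{il}w}$), and it reproduces (\ref{fg}) exactly, including the upper parameters $1-\Delta-k_{jl},1-\Delta-m_{jl}$ and lower parameters $\Theta,-\Theta$. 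The two derivations are at bottom the same computation --- the Prudnikov entry \cite[Eq. (2.24.1.1)]{ref4} is itself established by the Mellin-convolution argument you carry out by hand --- but yours is self-contained (no table look-up beyond the $G$-function definition \cite[Eq. (9.301)]{tables}) and makes the existence conditions explicit: your observation that $\Delta\geq\left|\Theta\right|$ together with $m_{jl},k_{jl}>0$ yields $\left|\Theta\right|<\Delta+\min(m_{jl},k_{jl})$ is exactly what guarantees a separating contour, a point the paper leaves implicit. The paper's route, in exchange, is shorter given the table entry and produces (\ref{fh}) as a reusable intermediate. One minor point you gloss over: legitimizing the Mellin inversion also requires noting that the transform decays along vertical lines in the strip (it does, by Stirling's formula, consistent with $c^{*}=m+n-\tfrac{p+q}{2}=2>0$ for the resulting $G^{2,2}_{2,2}$, which converges for $\left|\arg z\right|<\pi$, satisfied here since $\beta x>0$) and that $f_{\gamma_{l}}$ is continuous on $(0,\infty)$; this is routine and does not affect correctness.
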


\begin{proof}
Since $h_{il}$ is $K_{G}$ distributed, $\left|h_{il}\right|^{2}$ has a PDF as follows
\begin{align}
f_{\left|h_{il}\right|^{2}}(x)=\frac{\left(\frac{k_{il}m_{il}}{\Omega_{il}}\right)^{\Delta}x^{\Delta-1}}{\Gamma(m_{il})\Gamma(k_{il})}G^{2,0}_{0,2}\left[\frac{k_{il}m_{il}x}{\Omega_{il}}~\vline
\begin{array}{c}
-\\
\Theta,-\Theta
\end{array} \right].
\label{fh}
\end{align}
For the derivation of (\ref{fh}), the transformation \cite[Eq. (9.34.3)]{tables} is used into \cite[Eq. (2)]{ref3}. Further, it holds that $f_{\left|h_{il}\right|^{2}/\left|h_{jl}\right|^{2}}(x)=\int^{\infty}_{0}y f_{\left|h_{il}\right|^{2}}(xy)f_{\left|h_{jl}\right|^{2}}(y)dy$, since $\left|h_{il}\right|^{2}$ and $\left|h_{jl}\right|^{2}$ are mutually independent, while $f_{\left|h_{jl}\right|^{2}}(.)$ is provided in (\ref{fh}) by substituting $i$ with $j$. Finally, utilizing \cite[Eq. (2.24.1.1)]{ref4} into the latter integral and after some straightforward manipulations concludes the proof.
\end{proof}

\begin{lmm}
Moment generating function (MGF) of $\gamma_{l}$ yields as
\begin{align}
\nonumber
&\mathcal{M}_{\gamma_{l}}(s)=\frac{\left(\frac{k_{il}m_{il}\Omega_{jl}N_{0}}{k_{jl}m_{jl}\Omega_{il}w s}\right)^{\Delta}}{\Gamma(m_{il})\Gamma(m_{jl})\Gamma(k_{il})\Gamma(k_{jl})}\\
&\times G^{2,3}_{3,2}\left[\frac{k_{il}m_{il}\Omega_{jl}N_{0}}{k_{jl}m_{jl}\Omega_{il}w s}~\vline
\begin{array}{c}
1-\Delta,1-\Delta-k_{jl},1-\Delta-m_{jl}\\
\Theta,-\Theta
\end{array} \right]
\label{mg}
\end{align}
\end{lmm}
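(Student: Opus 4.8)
The plan is to compute the MGF directly as the Laplace transform of the density established in Lemma~1. Adopting the Laplace convention $\mathcal{M}_{\gamma_{l}}(s)=\mathbb{E}[e^{-s\gamma_{l}}]=\int_{0}^{\infty}e^{-sx}f_{\gamma_{l}}(x)\,dx$, I would insert the expression (\ref{fg}) for $f_{\gamma_{l}}(x)$ and factor out the constant (in $x$) prefactor, leaving the single integral $\int_{0}^{\infty}e^{-sx}x^{\Delta-1}G^{2,2}_{2,2}[c\,x\mid\cdots]\,dx$, where $c\triangleq\frac{k_{il}m_{il}\Omega_{jl}N_{0}}{k_{jl}m_{jl}\Omega_{il}w}$ collects the scaling of the Meijer's-$G$ argument.

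The crux is to identify this integral with a tabulated Laplace transform of a Meijer's-$G$ function, namely \cite[Eq.~(7.813.1)]{tables} (equivalently, one may represent $e^{-sx}$ as a $G^{1,0}_{0,1}$ function and apply the product-of-two-$G$-functions formula \cite[Eq.~(2.24.1.1)]{ref4} already invoked in Lemma~1). That identity evaluates $\int_{0}^{\infty}e^{-sx}x^{\rho-1}G^{m,n}_{p,q}[\omega x\mid\cdots]\,dx=s^{-\rho}\,G^{m,n+1}_{p+1,q}[\omega/s\mid\cdots]$, whose combined effect is to prepend the single upper parameter $1-\rho$, to raise both the order $n$ and the first subscript $p$ by one, to rescale the argument by $1/s$, and to generate the overall factor $s^{-\rho}$.

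Specializing to $\rho=\Delta$ and $\omega=c$, the $G^{2,2}_{2,2}$ of (\ref{fg}) is promoted to a $G^{2,3}_{3,2}$ whose upper list becomes $1-\Delta,\,1-\Delta-k_{jl},\,1-\Delta-m_{jl}$ (the lower list $\Theta,-\Theta$ being left untouched) and whose argument becomes $c/s$, while the factor $s^{-\Delta}$ appears. Absorbing $s^{-\Delta}$ into the existing coefficient $c^{\Delta}$ reproduces exactly the prefactor $\big(\frac{k_{il}m_{il}\Omega_{jl}N_{0}}{k_{jl}m_{jl}\Omega_{il}w s}\big)^{\Delta}$ and the argument of (\ref{mg}), completing the derivation.

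I expect the only delicate point to be the verification of the convergence conditions that accompany the transform identity, rather than any genuine analytical difficulty. The relevant requirement is $\mathrm{Re}(\Delta+b_{k})>0$ for the $m=2$ lower parameters $b_{k}\in\{\Theta,-\Theta\}$; since $\Delta+\Theta=k_{il}$ and $\Delta-\Theta=m_{il}$, this reduces to $k_{il}>0$ and $m_{il}>0$, both of which hold under the standing assumptions $m_{il}\geq 0.5$ and a strictly positive shadowing parameter $k_{il}$. Everything else is a direct, mechanical parameter substitution, so no further obstacle is anticipated.
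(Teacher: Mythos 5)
Your proposal is correct and follows essentially the same route as the paper: the paper likewise writes $\mathcal{M}_{\gamma_{l}}(s)$ as the Laplace transform of the PDF in (\ref{fg}) and evaluates the resulting integral $\int_{0}^{\infty}x^{\Delta-1}e^{-sx}G^{2,2}_{2,2}[\cdot]\,dx$ via the tabulated identity \cite[Eq.~(2.24.3.1)]{ref4}, which is the same power-weighted Laplace transform of a Meijer's-$G$ function you invoke (your citation of \cite[Eq.~(7.813.1)]{tables} and the $G^{1,0}_{0,1}$ representation of $e^{-sx}$ combined with \cite[Eq.~(2.24.1.1)]{ref4} are equivalent forms of it). Your parameter bookkeeping (prepending $1-\Delta$, argument rescaled to $c/s$, factor $s^{-\Delta}$) and the convergence check $\Delta\pm\Theta=k_{il},m_{il}>0$ match the stated result exactly.
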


\begin{proof}
Since $\mathcal{M}_{\gamma_{l}}(s)\triangleq \mathbb{E}[\exp(-s \gamma_{l})]$, the following integral appears
\begin{align}
\nonumber
&\int^{\infty}_{0}x^{\Delta-1}\exp(-s x)\\
&\times G^{2,2}_{2,2}\left[\frac{k_{il}m_{il}\Omega_{jl}N_{0}x}{k_{jl}m_{jl}\Omega_{il}w}~\vline
\begin{array}{c}
1-\Delta-k_{jl},1-\Delta-m_{jl}\\
\Theta,-\Theta
\end{array} \right]dx,
\label{intmg}
\end{align}
which is evaluated in closed-form with the aid of \cite[Eq. (2.24.3.1)]{ref4} and (\ref{mg}) directly arises.
\end{proof}

\begin{lmm}
Partial derivative of $\mathcal{M}_{\gamma_{l}}(.)$ is derived by
\begin{align}
\nonumber
&\frac{\partial}{\partial s}\mathcal{M}_{\gamma_{l}}(s)=-\frac{\left(\frac{k_{il}m_{il}\Omega_{jl}N_{0}}{k_{jl}m_{jl}\Omega_{il}w}\right)^{\Delta}}{\Gamma(m_{il})\Gamma(m_{jl})\Gamma(k_{il})\Gamma(k_{jl})s^{\Delta+1}}\\
&\times G^{2,3}_{3,2}\left[\frac{k_{il}m_{il}\Omega_{jl}N_{0}}{k_{jl}m_{jl}\Omega_{il}w s}~\vline
\begin{array}{c}
-\Delta,1-\Delta-k_{jl},1-\Delta-m_{jl}\\
\Theta,-\Theta
\end{array} \right].
\label{dmg}
\end{align}
\end{lmm}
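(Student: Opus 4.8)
The plan is to avoid differentiating the closed-form expression in (\ref{mg}) directly—which would force a product rule together with a derivative identity for the $G^{2,3}_{3,2}$ function—and instead to return to the integral representation of the MGF. Starting from $\mathcal{M}_{\gamma_l}(s)=\int_0^\infty \exp(-sx)\,f_{\gamma_l}(x)\,dx$ and differentiating under the integral sign, one obtains $\frac{\partial}{\partial s}\mathcal{M}_{\gamma_l}(s)=-\int_0^\infty x\exp(-sx)\,f_{\gamma_l}(x)\,dx$. The single extra factor of $x$ brought down by the differentiation is the whole point of the argument.

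Next I would substitute $f_{\gamma_l}(x)$ from (\ref{fg}). The resulting integral is structurally identical to (\ref{intmg}) used in the proof of Lemma 2, except that the leading monomial is now $x^{\Delta}$ rather than $x^{\Delta-1}$, while the Meijer-$G$ factor with argument proportional to $x$ is left completely untouched. In the notation of the Mellin--Barnes reduction, this amounts to raising the exponent parameter $\alpha$ appearing as $x^{\alpha-1}$ from $\alpha=\Delta$ to $\alpha=\Delta+1$.

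Then I would invoke the very same reduction formula \cite[Eq. (2.24.3.1)]{ref4} that closed (\ref{intmg}). That formula emits a factor $s^{-\alpha}$ out front and prepends the new upper parameter $1-\alpha$ to the $G$-function while raising its order to $G^{2,3}_{3,2}$. Setting $\alpha=\Delta+1$ therefore yields the prefactor $s^{-(\Delta+1)}$ and the leading upper parameter $1-\alpha=-\Delta$, so the upper list becomes $-\Delta,\,1-\Delta-k_{jl},\,1-\Delta-m_{jl}$ and the lower list stays $\Theta,-\Theta$. Reinstating the constant $\big(\frac{k_{il}m_{il}\Omega_{jl}N_0}{k_{jl}m_{jl}\Omega_{il}w}\big)^{\Delta}$, the Gamma-function normalization, and the overall minus sign reproduces (\ref{dmg}) verbatim.

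The main obstacle here is bookkeeping rather than analysis: the integral is merely a one-unit shift in $\alpha$ of the one already evaluated for Lemma 2, so no new special-function machinery is required. The only step genuinely deserving care is the interchange of differentiation and integration, which I would justify by noting that $x\,\exp(-sx)\,f_{\gamma_l}(x)$ is integrable and is dominated uniformly for $s$ ranging over any compact subinterval of $(0,\infty)$, so that dominated convergence legitimizes passing $\partial/\partial s$ inside the integral.
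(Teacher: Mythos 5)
Your proposal is correct and follows essentially the same route as the paper: differentiating the MGF integral under the integral sign to raise the monomial from $x^{\Delta-1}$ to $x^{\Delta}$, and then closing the resulting integral with the same reduction formula \cite[Eq.\ (2.24.3.1)]{ref4} used for (\ref{mg}), which produces the $s^{-(\Delta+1)}$ prefactor and the prepended upper parameter $-\Delta$. Your explicit dominated-convergence justification for the interchange of $\partial/\partial s$ and the integral is a welcome addition that the paper leaves implicit, but it does not change the substance of the argument.
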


\begin{proof}
From (\ref{intmg}), the following integral appears
\begin{align}
\nonumber
&\frac{\partial}{\partial s}\mathcal{M}_{\gamma_{l}}(s)\propto -\int^{\infty}_{0}x^{\Delta}\exp(-s x)\\
&\times G^{2,2}_{2,2}\left[\frac{k_{il}m_{il}\Omega_{jl}N_{0}x}{k_{jl}m_{jl}\Omega_{il}w}~\vline
\begin{array}{c}
1-\Delta-k_{jl},1-\Delta-m_{jl}\\
\Theta,-\Theta
\end{array} \right]dx.
\label{intmgg}
\end{align}
Again, using \cite[Eq. (2.24.3.1)]{ref4}, as for the derivation of (\ref{mg}), (\ref{dmg}) is extracted.
\end{proof}

We are now in a position to formulate the $e2e$ ergodic capacity.

\begin{prop}
Ergodic capacity of the considered system is expressed as
\begin{align}
\nonumber
&C=\sum^{2}_{l=1}\frac{\left(\frac{k_{il}m_{il}\Omega_{jl}N_{0}}{k_{jl}m_{jl}\Omega_{il}w}\right)^{\Delta}}{\log(4)\Gamma(m_{il})\Gamma(m_{jl})\Gamma(k_{il})\Gamma(k_{jl})}\\
\nonumber
&\times G^{4,3}_{4,4}\left[\scriptstyle\frac{k_{il}m_{il}\Omega_{jl}N_{0}}{k_{jl}m_{jl}\Omega_{il}w}~\vline
\begin{array}{c}
\scriptstyle 1-\Delta-k_{jl},1-\Delta-m_{jl},-\Delta,1-\Delta\\
\scriptstyle \Theta,-\Theta,-\Delta,-\Delta
\end{array} \right]-\sum^{\mathcal{N}}_{v=1}\psi_{v}\\
&\times \frac{\text{Ei}(-s_{v})}{\log(4)}\left(\textstyle \mathcal{M}_{\gamma_{1}}(s_{v})\frac{\partial}{\partial s_{v}}\mathcal{M}_{\gamma_{2}}(s_{v})+\mathcal{M}_{\gamma_{2}}(s_{v})\frac{\partial}{\partial s_{v}}\mathcal{M}_{\gamma_{1}}(s_{v})\right)
\label{allc}
\end{align}
where $\text{Ei(.)}$ stands for the exponential integral function Ei \cite[Eq. (8.211.1)]{tables}, $\psi_{v}\triangleq \frac{\pi^{2}\sin(\frac{(2v-1)\pi}{2\mathcal{N}})}{4\mathcal{N}\cos^{2}(\frac{\pi}{4}\cos(\frac{(2v-1)\pi}{2\mathcal{N}})+\frac{\pi}{4})}$ and $s_{v}\triangleq \tan(\frac{\pi}{4}\cos(\frac{(2v-1)\pi}{2\mathcal{N}})+\frac{\pi}{4})$. Ideally, $\mathcal{N}$ goes to infinity, but as was shown into \cite{ref6,ref7} (and it will be verified from the subsequent numerical results), setting $\mathcal{N}=60$ provides quite a high accuracy level.
\end{prop}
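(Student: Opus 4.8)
The plan is to compute the three pieces of \eqref{ergcap} separately, exploiting that $C_{1}$ and $C_{2}$ decouple into single-variable integrals that admit an exact Meijer's-$G$ form, whereas the coupled term $C_{1+2}$ does not and must be reduced to a one-dimensional quadrature.

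For each $C_{l}=\frac{1}{\ln 2}\int_{0}^{\infty}\ln(1+x)f_{\gamma_{l}}(x)\,dx$, the first step is to write the logarithm as a Meijer's-$G$ function, $\ln(1+x)=G^{1,2}_{2,2}\big[x\,\big|\,{}^{1,1}_{1,0}\big]$. Inserting $f_{\gamma_{l}}$ from Lemma~1 turns the integrand into $x^{\Delta-1}$ times a product of two Meijer's-$G$ functions, an integral tabulated as \cite[Eq. (2.24.1.1)]{ref4} --- the very formula already used in the proof of Lemma~1. Applying it collapses the product into a single $G$-function of order $G^{4,3}_{4,4}$: with the power exponent $\alpha=\Delta$, the logarithm's lower pair $\{1,0\}$ maps to the extra upper entries $\{-\Delta,1-\Delta\}$ and its upper pair $\{1,1\}$ maps to the extra lower entries $\{-\Delta,-\Delta\}$, exactly reproducing the first sum of \eqref{allc}; the prefactor $1/\log 4=1/(2\ln 2)$ absorbs the change of base together with the $1/2$ in \eqref{ergcap}.

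The term $C_{1+2}=\frac{1}{\ln 2}\,\mathbb{E}[\ln(1+\gamma_{1}+\gamma_{2})]$ is the crux. First I would introduce the MGF through the Frullani-type identity $\ln(1+x)=\int_{0}^{\infty}s^{-1}e^{-s}(1-e^{-sx})\,ds$; taking expectations and using the independence of $\gamma_{1},\gamma_{2}$, so that $\mathcal{M}_{\gamma_{1}+\gamma_{2}}(s)=\mathcal{M}_{\gamma_{1}}(s)\mathcal{M}_{\gamma_{2}}(s)$, gives $C_{1+2}=\frac{1}{\ln 2}\int_{0}^{\infty}s^{-1}e^{-s}\big(1-\mathcal{M}_{\gamma_{1}}(s)\mathcal{M}_{\gamma_{2}}(s)\big)\,ds$. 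Since $s^{-1}e^{-s}=\frac{d}{ds}\mathrm{Ei}(-s)$, I would integrate by parts; the boundary terms vanish because $1-\mathcal{M}_{\gamma_{1}+\gamma_{2}}(s)=O(s)$ as $s\to 0$ (which defeats the logarithmic singularity of $\mathrm{Ei}$) and $\mathrm{Ei}(-s)\to 0$ as $s\to\infty$. This leaves $C_{1+2}=\frac{1}{\ln 2}\int_{0}^{\infty}\mathrm{Ei}(-s)\,\frac{\partial}{\partial s}\big(\mathcal{M}_{\gamma_{1}}(s)\mathcal{M}_{\gamma_{2}}(s)\big)\,ds$, and the product rule produces precisely the bracket $\mathcal{M}_{\gamma_{1}}\partial_{s}\mathcal{M}_{\gamma_{2}}+\mathcal{M}_{\gamma_{2}}\partial_{s}\mathcal{M}_{\gamma_{1}}$ appearing in \eqref{allc}, whose two factors are already in closed form by Lemmas~2 and~3.

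As this last integral has no elementary primitive, the final step is a Gauss--Chebyshev quadrature. Under the map $s=\tan\!\big(\frac{\pi}{4}(1+t)\big)$, which sends $t\in(-1,1)$ onto $s\in(0,\infty)$ with $ds=\frac{\pi}{4}\sec^{2}\!\big(\frac{\pi}{4}(1+t)\big)\,dt$, I would insert the Chebyshev weight $(1-t^{2})^{-1/2}$ and apply $\int_{-1}^{1}g(t)(1-t^{2})^{-1/2}\,dt\approx\frac{\pi}{\mathcal{N}}\sum_{v}g(t_{v})$ at the nodes $t_{v}=\cos\!\big(\frac{(2v-1)\pi}{2\mathcal{N}}\big)$. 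This yields $s_{v}=\tan\!\big(\frac{\pi}{4}\cos(\frac{(2v-1)\pi}{2\mathcal{N}})+\frac{\pi}{4}\big)$, while the accumulated constant $\frac{\pi}{\mathcal{N}}\sqrt{1-t_{v}^{2}}\,\frac{\pi}{4}\sec^{2}(\cdots)$, with $\sqrt{1-t_{v}^{2}}=\sin(\frac{(2v-1)\pi}{2\mathcal{N}})$, equals exactly the weight $\psi_{v}$; assembling $-\tfrac12 C_{1+2}$ then gives the second sum in \eqref{allc}. The main obstacle I anticipate is the $C_{1+2}$ term as a whole --- conjuring the MGF via the Frullani identity, justifying the integration by parts through the delicate $s\to 0$ boundary behaviour, and aligning the quadrature substitution so that the collected Jacobian and Chebyshev weight reproduce the prescribed $\psi_{v}$ and $s_{v}$.
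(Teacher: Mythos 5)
Your proposal is correct and follows essentially the same route as the paper: the same decomposition of \eqref{ergcap}, the same log-to-Meijer-$G$ reduction via \cite[Eq. (2.24.1.1)]{ref4} for $C_{1}$ and $C_{2}$ (your parameter bookkeeping for the $G^{4,3}_{4,4}$ term is exactly right), and the same $\mathrm{Ei}$-kernel MGF integral with Gauss--Chebyshev quadrature for $C_{1+2}$, the only difference being that where the paper simply cites \cite[Eqs.\ (8) and (10)]{ref6} you re-derive those identities from scratch (Frullani representation, integration by parts, explicit substitution), recovering the paper's $\psi_{v}$ and $s_{v}$ verbatim. One small refinement: your boundary estimate $1-\mathcal{M}_{\gamma_{1}}(s)\mathcal{M}_{\gamma_{2}}(s)=O(s)$ as $s\to 0$ presumes $\mathbb{E}[\gamma_{1}+\gamma_{2}]<\infty$, which fails when $\min(k_{jl},m_{jl})\le 1$ (e.g., the setting $m=1$ used in the paper's Fig.~\ref{fig3}, where $\mathbb{E}[1/|h_{jl}|^{2}]$ diverges); the boundary term still vanishes, however, since $1-\mathcal{M}_{\gamma_{1}+\gamma_{2}}(s)=O(s^{\epsilon})$ for some $\epsilon>0$ in all admissible cases, which suffices against the logarithmic singularity of $\mathrm{Ei}(-s)$.
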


\begin{proof}
Based on (\ref{ergcap}), $C_{l}$ is obtained by using the transformation of the logarithm function into the Meijer's-$G$ function \cite[Eq. (8.4.6.5)]{ref4} and then utilizing \cite[Eq. (2.24.1.1)]{ref4}. Unfortunately, this standard technique can not be used for the derivation of $C_{1+2}$, since the corresponding PDF of $\gamma_{1}+\gamma_{2}$ is not feasible. Nonetheless, based on \cite[Eq. (10)]{ref6}, we are able to bypass the PDF involvement and to evaluate $C_{1+2}$ as
\begin{align}
\nonumber
C_{1+2}\triangleq &\frac{1}{\log(4)}\int^{\infty}_{0}\text{Ei}(-s)\\
&\times \left(\mathcal{M}_{\gamma_{1}}(s)\frac{\partial}{\partial s}\mathcal{M}_{\gamma_{2}}(s)+\mathcal{M}_{\gamma_{2}}(s)\frac{\partial}{\partial s}\mathcal{M}_{\gamma_{1}}(s)\right)ds.
\label{capintt}
\end{align}
Still, (\ref{capintt}) cannot be resolved in closed-form in terms of standard build-in functions in well-known mathematical software tools. Thus, using an alternative representation of (\ref{capintt}) with respect to the Gauss-Chebyshev quadrature formula \cite[Eq. (8)]{ref6}, (\ref{capintt}) becomes
\begin{align}
\nonumber
&C_{1+2}\triangleq \frac{1}{\log(4)}\sum^{\mathcal{N}}_{1}\psi_{v}\text{Ei}(-s_{v})\\
&\times \left(\mathcal{M}_{\gamma_{1}}(s_{v})\frac{\partial}{\partial s_{v}}\mathcal{M}_{\gamma_{2}}(s_{v})+\mathcal{M}_{\gamma_{2}}(s_{v})\frac{\partial}{\partial s_{v}}\mathcal{M}_{\gamma_{1}}(s_{v})\right),
\label{capinttt}
\end{align}
yielding (\ref{allc}).
\end{proof}

Hence, plugging (\ref{mg}) and (\ref{dmg}) into (\ref{allc}), a closed-form approximation of the $e2e$ ergodic capacity is obtained in terms of finite sum series of the Meijer's-$G$ function. Notice that $G^{m,n}_{p,q}[.]$ is included as standard build-in function in most popular mathematical software packages; thereby it can be calculated quite easily and efficiently for arbitrary fading/shadowing parameters.

\subsection*{Special case: When transmission power equals $P_{\text{max}}$}
Let $d^{\alpha_{jk}}_{jk}\rightarrow \infty$, where the $k$th hop indicates the asymptotic secondary-to-primary node link (i.e., $k \in \{1,2\}$, $k\neq l$, conditioned on $\Omega_{jk}\rightarrow 0^{+}$). The physical meaning of such a scenario relies on the fact that when the distance of S-to-P$_{R}$ or R-to-P$_{R}$ link (or both) is considerably long, this assumption takes place. Alternatively, this asymptotic behavior can also be modeled for relatively shorter distances, but in dense propagation environments (e.g., urban terrestrials where $\alpha_{jl} > 4$). Hence, $\frac{w}{|h_{jk}|^{2}}\rightarrow \infty$, reflecting that $\frac{w}{|h_{jk}|^{2}}\gg \frac{w'}{|h_{jk}|^{2}}=P_{\text{max}}$. In other words, the conventional (non-cognitive) transmission approach occurs in this case.

To this end, referring back to (\ref{ergcap}) and based on \cite[Eq. (7)]{ref3}, $C_{k}$ becomes
\begin{align}
C_{k}=\frac{\left(\frac{k_{ik}m_{ik}N_{0}}{\Omega_{ik}P_{\text{max}}}\right)^{\Delta}}{\log(4)\Gamma(m_{ik})\Gamma(k_{ik})}G^{4,1}_{2,4}\left[\scriptstyle\frac{k_{ik}m_{ik}N_{0}}{\Omega_{ik}P_{\text{max}}}~\vline
\begin{array}{c}
\scriptstyle -\Delta,1-\Delta\\
\scriptstyle \Theta,-\Theta,-\Delta,-\Delta
\end{array} \right].
\label{ck}
\end{align}
For the more challenging $C_{1+2}$ parameter in (\ref{ergcap}) (which becomes in this case $C_{l+k}$ or $C_{k+l}$), we have from \cite[Eq. (4)]{ref3} that 
\begin{align}
\nonumber
\mathcal{M}_{\gamma_{k}}(s)=&\left(\frac{k_{ik}m_{ik}N_{0}}{\Omega_{ik}P_{\text{max}} s}\right)^{\Delta-\frac{1}{2}}\exp\left(\frac{k_{ik}m_{ik}N_{0}}{2\Omega_{ik}P_{\text{max}} s}\right)\\
&\times \mathcal{W}_{-\Delta+\frac{1}{2},\Theta}\left(\frac{k_{ik}m_{ik}N_{0}}{\Omega_{ik}P_{\text{max}} s}\right),
\label{mk}
\end{align}
where $\mathcal{W}_{\alpha,\beta}(.)$ stands for the Whittaker's-$W$ hypergeometric function \cite[Eq. (9.220.4)]{tables}. Finally, following similar lines of reasoning as for the derivation of (\ref{dmg}), partial derivative of $\mathcal{M}_{\gamma_{k}}(.)$ yields as
\begin{align}
\nonumber
\frac{\partial}{\partial s}\mathcal{M}_{\gamma_{k}}(s)=&-\frac{\left(\frac{k_{ik}m_{ik}N_{0}}{\Omega_{ik}P_{\text{max}}}\right)^{\Delta-\frac{1}{2}}k_{il}m_{il}}{s^{\Delta+\frac{1}{2}}}\exp\left(\frac{k_{ik}m_{ik}N_{0}}{2\Omega_{ik}P_{\text{max}} s}\right)\\
&\times \mathcal{W}_{-\Delta-\frac{1}{2},\Theta}\left(\frac{k_{ik}m_{ik}N_{0}}{\Omega_{ik}P_{\text{max}} s}\right).
\label{dmk}
\end{align}
Thus, by appropriately substituting (\ref{ck}), (\ref{mk}) and (\ref{dmk}) into (\ref{ergcap}), ergodic capacity is obtained whenever $\frac{w}{|h_{jk}|^{2}}> P_{\text{max}}$ occurs in one or both hops.

Further, it is interesting to indicate the case when the latter condition occurs in terms of the distance between the primary and secondary nodes. By averaging out the presence of channel fading based on \cite[Eq. (5)]{ref3}, we have that
\begin{align*}
\frac{w}{|h_{jk}|^{2}}>P_{\text{max}}\Rightarrow d_{jk}^{-\alpha_{jk}}<\frac{w}{P_{\text{max}}}\Leftrightarrow d_{jk}>\left(\frac{P_{\text{max}}}{w}\right)^{\frac{1}{\alpha_{jk}}}.
\end{align*}
As an illustrative example, let $P_{\text{max}}=0$dB, $w=-3$dB and $\alpha_{jk}=4$ (i.e., a typical rural terrestrial). Then, only when the distance between the secondary source (or the relay) and the primary node is greater than $1.19$ km, the corresponding secondary node(s) may reach to the maximal achievable transmission power.

\section{Numerical Results and Discussion}
\begin{table}
\caption {Number of Terms Required For Convergence Up To The $4$th Decimal Point$^{*}$} 
\begin{center}
\begin{tabular}{l| l l l}\hline
$d_{\{j,1\}}=d_{\{j,2\}}$ & $w/N_{0}=0$dB &$w/N_{0}=10$dB & $w/N_{0}=15$dB\\\hline
 0.05 &49 &51 &51 \\
 0.1 &52 &56 &57 \\
 0.3 &57 &58 &58 \\
 0.8 &58 &59 &60 \\\hline
\end{tabular}
\end{center}
*$d_{\{i,1\}}=d_{\{i,2\}}=0.5$, $k_{i,l}=1$, $k_{j,l}=4$, $m_{i,l}=1$, $m_{j,l}=3$, $\alpha=4$, and $P_{\text{max}}/N_{0}=20$dB.
\label{table}
\end{table}

\begin{figure}[!t]
\centering
\includegraphics[trim=2.0cm 0.2cm 2.5cm 0cm,clip=true,totalheight=0.285\textheight]{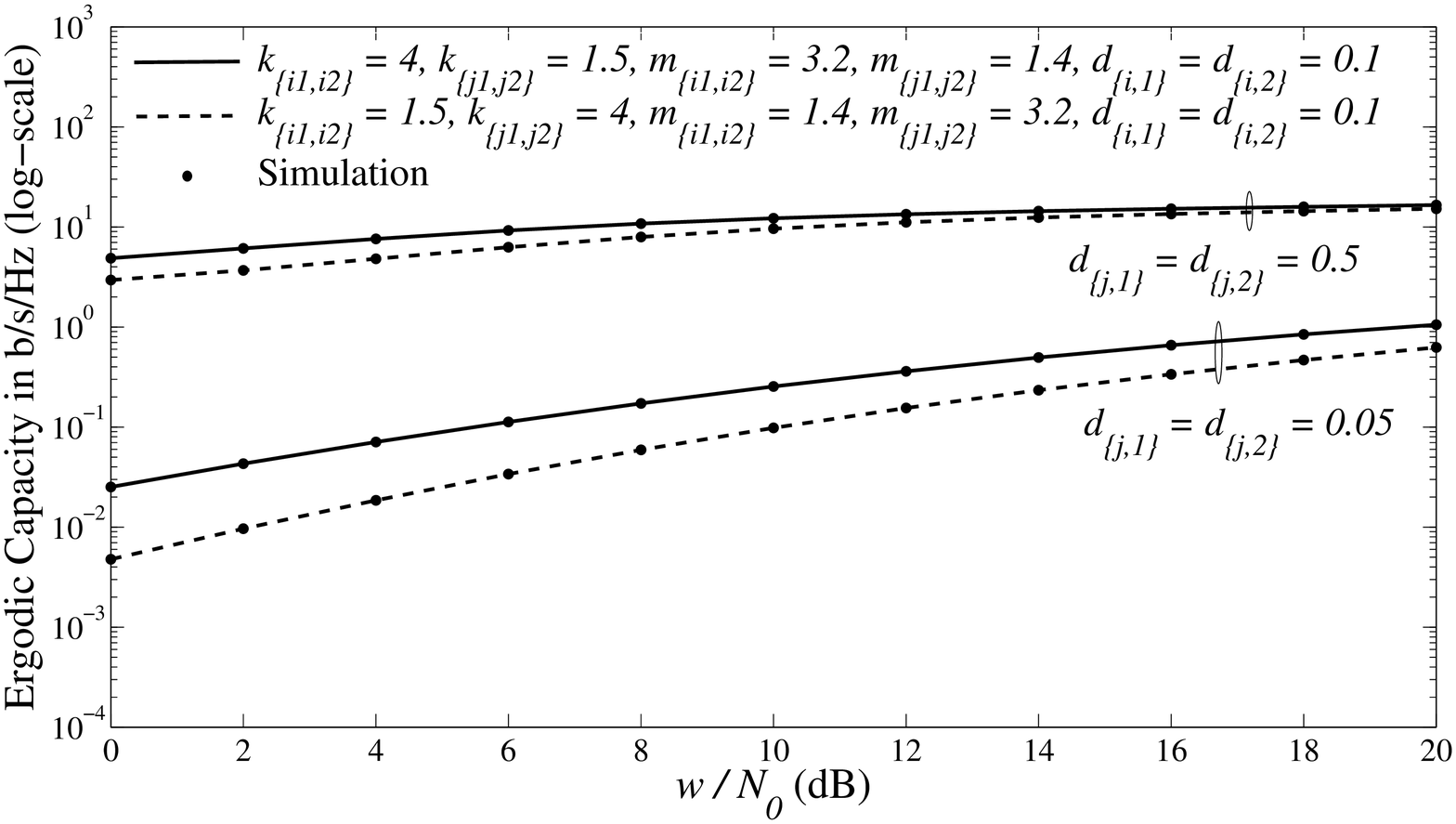}
\caption{Performance of the normalized $e2e$ ergodic capacity vs. various $w/N_{0}$ values.}
\label{fig2}
\end{figure}

\begin{figure}[!t]
\centering
\includegraphics[trim=2.5cm 0.2cm 2.5cm 0cm,clip=true,totalheight=0.285\textheight]{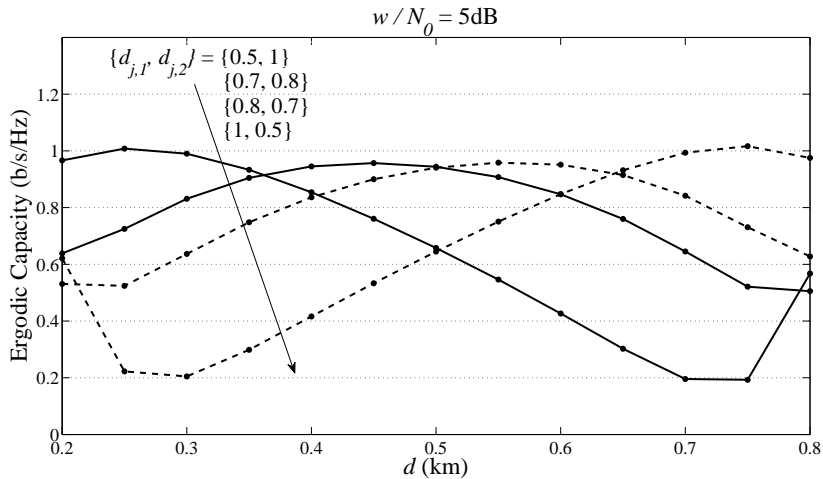}
\caption{Performance of the normalized $e2e$ ergodic capacity vs. various link distances between the secondary nodes, where $d_{i,1}\triangleq d$ and $d_{i,2}\triangleq 1-d$. Also, assume common fading parameters for all the involved links, namely, let $k=2$ and $m=1$, indicating the shadowing and multipath fading severity, respectively.}
\label{fig3}
\end{figure}

In this section, the theoretical results are presented (by setting $\mathcal{N}=60$) and compared with Monte-Carlo simulations. Table \ref{table} illustrates an indicative performance example in terms of the included series convergence. Similar behavior is observed for other system parameters. There is a good match between all the analytical and the respective simulation results and, hence, the accuracy of the proposed approach is verified. In what follows and without loss of generality, an identical path-loss factor is used for each link, namely, $\alpha=4$. 

In Fig. \ref{fig2}, symmetric link distances are assumed between the two hops. Obviously, fading severity affects the $e2e$ performance. However, it can be seen that the ergodic capacity is affected much more drastically by the link distance of P$_{R}$ (i.e., $d_{j,l}$) rather than the fading severity of the secondary signal.   

From a different standpoint, Fig. \ref{fig3} indicates how the ergodic capacity is influenced from various (non-symmetric) link distances. Curves in solid lines indicate closer primary-to-source distance, while the ones in dashed lines indicate closer primary-to-relay distance. It is clear that when the primary node is nearer (farther) to the secondary source (relay) or vice versa, then preserving symmetric distances between the secondary nodes (i.e., when $d=0.5$) is not a fruitful option.

\ifCLASSOPTIONcaptionsoff
  \newpage
\fi

\end{document}